\algrenewcommand\algorithmicrequire{\textbf{Input:}}
\algrenewcommand\algorithmicensure{\textbf{Output:}}
\newtheorem{theorem}{Theorem}
\newtheorem{lemma}{Lemma}
\newcommand{\E}{\mathbb{E}}
\title{Structural Nested Mean Models for Modified Treatment Policies}
\author[1]{Zach Shahn}
\affil[1]{CUNY School of Public Health, New York, NY, USA}
\begin{document}

\maketitle

\begin{abstract}
There is a growing literature on estimating effects of treatment strategies based on the natural treatment that would have been received in the absence of intervention, often dubbed `modified treatment policies' (MTPs). MTPs are sometimes of interest because they are more realistic than interventions setting exposure to an ideal level for all members of a population. In the general time-varying setting, \citet{richardson2013single} provided exchangeability conditions for nonparametric identification of MTP effects that could be deduced from Single World Intervention Graphs (SWIGs). \citet{diaz2023nonparametric} provided multiply robust estimators under these identification assumptions that allow for machine learning nuisance regressions. In this paper, we fill a remaining gap by extending Structural Nested Mean Models (SNMMs) \citep{jamie1994,robins2004optimal,snmm_review2014} to MTP settings, which enables characterization of (time-varying) heterogeneity of MTP effects. We do this both under the exchangeability assumptions of \citet{richardson2013single} and under parallel trends assumptions, which enables investigation of (time-varying heterogeneous) MTP effects in the presence of some unobserved confounding. 
\end{abstract}

\section{Introduction}
There is a growing literature on estimating effects of treatment strategies based on the natural treatment that would have been received in the absence of intervention \citep{robins2004effects,richardson2013single,young2014identification,haneuse2013estimation,munoz2012population,diaz2023nonparametric,sani2020identification}. Examples of such strategies include `exercise 20 minutes longer than you normally would' or `discharge patients from the intensive care unit one day later than under usual care'. Strategies depending on the natural value of treatment have been dubbed `modified treatment policies' (MTPs), a term which has gained some traction and we hence adopt here. MTPs are sometimes of interest because they are more realistic than interventions setting exposure to an ideal level for all members of a population. 

In the general time-varying setting, \citet{richardson2013single} provided exchangeability conditions for nonparametric identification of MTP effects that could be deduced from Single World Intervention Graphs (SWIGs). \citet{diaz2023nonparametric} provided multiply robust estimators under these identification assumptions that allow for machine learning nuisance regressions. In this paper, we fill a remaining gap by extending Structural Nested Mean Models (SNMMs) \citep{jamie1994,robins2004optimal,snmm_review2014} to MTP settings, which enables characterization of (time-varying) heterogeneity of MTP effects. We do this both under the exchangeability assumptions of \citet{richardson2013single} and under parallel trends assumptions \citep{shahn2022structural}, which enables investigation of (time-varying heterogeneous) MTP effects in the presence of some unobserved confounding. 

The structure of the paper is as follows. In Section \ref{section_notation}, we summarize notation and review definitions of MTPs and SNMMs. Once defined, we also provide some hypothetical examples of when SNMMs for MTP effects might be of interest. In Section \ref{section_exchangeability}, we provide identification results and Neyman orthogonal estimators under exchangeability assumptions. In Section \ref{section_pt}, we do the same under parallel trends assumptions. In Section \ref{sec:sim}, we provide simulations showing that our estimators are unbiased and that the sandwich variance estimates provide nominal coverage. In Section \ref{section:application}, we present a real data analysis estimating the effects of shifting mobility from its natural level on subsequent county level Covid-19 incidence. In Section \ref{sec:disc}, we conclude and discuss some directions for future work. 

\section{Notation, MTPs, and SNMMs}\label{section_notation}
Suppose we observe a cohort of $N$ subjects indexed by $i \in \{1,\ldots,N\}$. Assume that each subject is observed at regular intervals from baseline
time $0$ through end of follow-up time $K$, and there is no loss to
follow-up. At each time point $t$, the data are collected on $O
_t=(Z_t,Y_t,A_t)$ in that temporal order. $A_t$ denotes the (possibly
multidimensional with discrete and/or continuous components) treatment
received at time $t$, $Y_t$ denotes the outcome of interest at time $t$, and 
$Z_t$ denotes a vector of covariates at time $t$ excluding $Y_t$. Hence, $Z_0$
constitutes the vector of baseline covariates other than $Y_0$. For
arbitrary time varying variable $X$: we denote by $\bar{X}_t=(X_0,\ldots,X_t)
$ the history of $X$ through time $t$; we denote by $\underline{X}%
_t=(X_t,\ldots,X_K)$ the future of $X$ from time $t$ through time $K$; and
whenever the negative index $X_{-1}$ appears it denotes the null value with
probability 1. In Section \ref{section_exchangeability}, when we work under exchangeability assumptions, we define $\bar{L}_t$ to be $(\bar{Z}_t,\bar{Y}_{t})$, i.e. the joint covariate and outcome history through time $t$, but we do not require that the outcome is necessarily measured at all time points. In Section \ref{section_pt}, when we work under parallel trends assumptions, we define $\bar{L}_t$ to be $(\bar{Z}_t,\bar{Y}_{t-1})$, i.e. the joint covariate and outcome history through time $t$ excluding the most recent outcome $Y_t$, and we assume that the outcome is measured at each time point. Hence, in the parallel trends section, $L_0$ is 
$Z_0$. In all sections, we let $H_t=(\bar{L}_t,\bar{A}_{t-1})$ denote the relevant pre-treatment history. We denote random variables by capital letters and their realizations using lower case letters. We adopt the counterfactual framework for time-varying treatments \citep{robins1986new} which posits that corresponding to each time-varying
treatment regime $\bar{a}_t$, each subject has a counterfactual or potential
outcome $Y_{t+1}(\bar{a}_{t})$ that would have been observed had that
subject received treatment regime $\bar{a}_t$.

Let $g=(g_1,\ldots,g_K)$ denote an arbitrary MTP with each $g_t: (h_t,a_t)\rightarrow a_t^{+}$ a treatment rule setting the modified treatment ($a_t^{+}$) as a function of observed history through $t$. $A_t(g)$ is the treatment that would naturally occur at time $t$ had strategy $g$ been imposed through $t-1$. $A_t^{+}(g)$ is the treatment received at time $t$ under $g$, possibly a function of $A_t(g)$. $Y_t(g)\equiv Y_t(A_{t-1}^{+}(g))$ is the counterfactual outcome had treatment been assigned according to $g$. We will use the shorthand $Y_t(\bar{A}_m,\underline{g})$ to denote the counterfactual outcome under the observed regime through time $m$ followed by regime $g$ thereafter. 

A general regime SNMM for MTP $g$ models the contrasts
\begin{equation}\label{snmm}
    \gamma_{tk}^g(h_t,a_t)=E[Y_k(\bar{A}_{t},\underline{g})-Y_k(\bar{A}_{t-1},\underline{g})|H_t=h_t,A_t=a_t].
\end{equation}
(\ref{snmm}) represents the conditional lasting effects among subjects with observed history $(h_t,a_t)$ of receiving the observed treatment at time $t$ then switching to MTP $g$ thereafter, compared to switching to $g$ at time $t$ and continuing to follow it thereafter. Thus, (\ref{snmm}) might be interpreted as the conditional effect of one final blip of the observational regime before switching to an MTP. 

With knowledge of $\gamma_{tk}^g$, we can `blip down' or strip away these effects from observed outcomes to obtain consistent estimates of conditional counterfactual outcomes under the MTP $g$.

\begin{lemma}\label{lemma1}
  $E[Y_k-\sum_{j=t}^{k-1} \gamma_{j}(H_j,A_j)|H_t,A_t]=E[Y_k(\bar{A}_{t-1},\underline{g})|H_t,A_t]$
\end{lemma}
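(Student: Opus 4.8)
The plan is to obtain the identity from a deterministic, unit-level telescoping argument and then pass to conditional expectations via iterated expectations; here $\gamma_j$ in the statement is understood as $\gamma_{jk}^g$ from display~(\ref{snmm}). First I would note that, for each fixed $j$, the random variable $Y_k(\bar{A}_j,\underline{g})-Y_k(\bar{A}_{j-1},\underline{g})$ is precisely the integrand whose conditional mean given $(H_j,A_j)$ defines $\gamma_{jk}^g$. Summing these increments over $j=t,\ldots,k-1$ telescopes: the sum collapses to $Y_k(\bar{A}_{k-1},\underline{g})-Y_k(\bar{A}_{t-1},\underline{g})$.

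Second, I would identify the leftmost telescoped term with the observed outcome, i.e.\ $Y_k(\bar{A}_{k-1},\underline{g})=Y_k$. The justification is that $Y_k$ is realized at time $k$ and so depends only on treatment through time $k-1$; imposing $g$ from time $k$ onward therefore cannot alter it, and consistency gives $Y_k(\bar{A}_{k-1})=Y_k$. Combining this with the telescoping identity yields the unit-level equation $Y_k-Y_k(\bar{A}_{t-1},\underline{g})=\sum_{j=t}^{k-1}\bigl(Y_k(\bar{A}_j,\underline{g})-Y_k(\bar{A}_{j-1},\underline{g})\bigr)$, which holds pathwise.

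Third, I would take $E[\,\cdot\mid H_t,A_t]$ of both sides. On the right I would use linearity to move the conditional expectation inside the finite sum, and then for each $j\ge t$ insert an inner conditioning on $(H_j,A_j)$ by iterated expectations. This step is legitimate because the conditioning information is nested: $\sigma(H_t,A_t)\subseteq\sigma(H_j,A_j)$ for every $j\ge t$, since $H_j=(\bar{L}_j,\bar{A}_{j-1})$ contains $\bar{L}_t$, $\bar{A}_{t-1}$, and (for $j>t$) also $A_t$, while for $j=t$ the two $\sigma$-algebras coincide. The inner conditional expectation is by definition $\gamma_{jk}^g(H_j,A_j)$, so the right-hand side becomes $E\bigl[\sum_{j=t}^{k-1}\gamma_{jk}^g(H_j,A_j)\mid H_t,A_t\bigr]$, and rearranging gives the claimed equality.

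I do not expect a substantive obstacle. The only two points needing care are (i) the justification that $Y_k(\bar{A}_{k-1},\underline{g})=Y_k$, which is a consistency argument together with the fact that treatment after the outcome's realization time is irrelevant to it, and (ii) checking the nesting of conditioning $\sigma$-algebras so that the iterated-expectations step applies term by term. It is worth emphasizing that this lemma is purely algebraic and measure-theoretic: it uses only consistency and the definition of $\gamma_{jk}^g$, and invokes neither the exchangeability conditions of Section~\ref{section_exchangeability} nor the parallel-trends conditions of Section~\ref{section_pt} — those assumptions are needed only subsequently, to identify the blip functions $\gamma_{jk}^g$ from the observed data.
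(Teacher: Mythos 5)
Your proof is correct: the telescoping decomposition $Y_k-Y_k(\bar{A}_{t-1},\underline{g})=\sum_{j=t}^{k-1}\bigl(Y_k(\bar{A}_j,\underline{g})-Y_k(\bar{A}_{j-1},\underline{g})\bigr)$ (using $Y_k(\bar{A}_{k-1},\underline{g})=Y_k$ since $Y_k$ precedes $A_k$ and cannot depend on post-$k$ treatment), followed by term-by-term iterated expectations exploiting $\sigma(H_t,A_t)\subseteq\sigma(H_j,A_j)$ for $j\ge t$, is the canonical ``blip-down'' argument for SNMMs, and your observation that no exchangeability or parallel-trends assumption is needed at this stage is also right. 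Note that although the text points to the Appendix for the proof of Lemma~\ref{lemma1}, the Appendix as written contains only the proofs of Theorems~\ref{theorem1} and~\ref{theorem2}, so there is no written proof to compare against; your argument is the standard one and fills that gap correctly.
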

The proof of Lemma \ref{lemma1} is in the Appendix. It follows as a special case of Lemma \ref{lemma1} that given $\gamma_{tk}^g$, $E[Y_k(g)]$ is identified as $E[Y_k-\sum_{j=0}^{k-1}\gamma_{jk}(A_j,H_j)]$. 

A parametric SNMM specifies a functional form
\begin{equation}\label{snmm_par}
\gamma_{tk}^g(h_t,a_t)=\gamma_{tk}^g(h_t,a_t;\psi^*)
\end{equation}
with $\gamma_{tk}^g(h_t,a_t;\psi)$ a known function of finite dimensional parameter $\psi$ taking unknown true value $\psi^*$ such that $\gamma_{tk}^g(h_t,a_t;\psi)=0$ whenever $g(h_t,a_t)=a_t$ or $\psi=0$. 


As one substantive example, suppose $A_t$ is a continuous measure of medication adherence (e.g. proportion of prescribed doses taken), and it is thought that efficacy starts significantly declining below some threshold $\delta$. One might be interested in the effect of a partial adherence enforcing intervention $g(h_t,a_t)=\mathbf{1}\{a_t<\delta\}\delta+\mathbf{1}\{a_t\geq\delta\}a_t$ that sets treatment to $\delta$ if its natural value is below that threshold and leaves the natural value unchanged otherwise. In this setting, $\gamma_{tk}(h_t,a_t)$ is the lasting effect of adherence level $a_t$ versus $g(h_t,a_t)$ in month $t$ followed by regime $g$ thereafter. If $a_t\geq\delta$, then the effect would be $0$. For $a_t<\delta$, parametric model (\ref{snmm_par}) could characterize how the lasting effects of a blip of non-adherence relative to regime $g$ depend on the magnitude of nonadherence ($\delta-a_t$) and health history. 

As another example, this time concerning an exposure more often studied under parallel trends assumptions, suppose $A_t$ denotes state minimum wage in year $t$. Perhaps interest centers on the effect on poverty of increasing minimum wage by $\$2$ above its natural value if its natural value is less than $\$10$. (\ref{snmm}) then defines the conditional effects on poverty of natural wages less than $\$10$ relative to their value had they been increased by $\$2$. (\ref{snmm}) might, for example, reveal that $\$2$ wage hikes relative to observed wages would have been more impactful in states that actually had lower wages. This is a different estimand than the effect of treatment on the treated (i.e. the effect of wage hikes from the previous year in states that implemented them) that economists typically study with Difference in Differences.

\section{Identification and Estimation Under Exchangeability Assumptions}\label{section_exchangeability}
 \subsection{Point Exposure Setting}
In the point exposure setting where $K=1$, we will drop time subscripts for simplicity. We will make the standard consistency assumption that 
\begin{equation}\label{pe_consistency}
    \textbf{Consistency: } Y(a)=Y \text{ whenever }A=a.
\end{equation}
We also make a positivity assumption with respect to regime $g$
\begin{equation}\label{pe_positivity}
    f_{A,L}(a,l)>0 \text{ implies } f_{A,L}(g(a,l),l)>0,
\end{equation}
which states that for any treatment and covariate values that might be observed in the data, the corresponding $g$-modified treatment value with the same covariates might also be observed. Finally,
\citet{haneuse2013estimation} introduced the MTP exchangeability assumption
\begin{equation}\label{pe_seqex}
E[Y(a')|A=a,L=l]=E[Y(a')|A=a',L=l] \text{ for all } a'=g(a,l)
\end{equation}
and showed that under assumptions (\ref{pe_consistency}), (\ref{pe_positivity}), and (\ref{pe_seqex}), conditional MTP effects are identified by 
\begin{equation}
E[Y(g)|A=a,L=l]=E[Y|A=g(a,l),L=l].
\end{equation}
Identification of a SMM follows immediately:
\begin{align}\label{pe_ex_id}
    \gamma^g(l,a)\equiv E[Y-Y(g(l,a))|A=a,L=l]=E[Y|A=a,L=l]-E[Y|A=g(a,l),L=l].
\end{align}
Consider a parametric SMM $\gamma^g(l,a;\psi^*)$ as defined in (\ref{snmm_par}). Then (\ref{pe_ex_id}) implies that $\psi^*$ satisfies
\begin{equation}
    E[q(L,A)(Y-\gamma^g(L,A;\psi^*)-\mu(g(A,L),L))]=0
\end{equation}
where $\mu(a,l)=E[Y|A=a,L=l]$. Thus, a consistent and asymptotically normal estimator of $\psi^*$ can be obtained by solving the corresponding estimating equations 
\begin{equation}\label{eq:outcome_regression_ex} \mathbb{P}_N\begin{bmatrix}b(A,L)\{Y-\mu(A,L;\beta)\}\\
q(A,L)\{Y-\gamma^g(A,L;\psi)-\mu(g(A,L);\beta)\}\end{bmatrix}=0
\end{equation}
where $\mu(A,L;\beta)$ is a model for $\mu(A,L)$ smooth in an $r$-dimensional parameter $\beta$ (equal to $\beta^*$ under the true law) and $q(A,L)$ and $b(A,L)$ are conformable analyst selected index functions. 

The estimator solving (\ref{eq:outcome_regression_ex}) is very sensitive to misspecification of $\mu$, and asymptotic normality may not hold if, to avoid misspecification, $\mu$ is estimated by machine learning. Therefore, we introduce an estimator that has the Neyman-orthogonality property \citep{chernozhukov2018double} crucial for enabling machine learning estimation of nuisance functions.

Let $T_g$ act on functions $h$ by $(T_gh)(A,L)=h(g(A,L),L)$, and define the $L$-conditional inner product weighted by the observed treatment density:
\[
\langle u,v\rangle \;=\; E\!\left[\int u(L,a)\,v(L,a)\,f_{A\mid L}(a\mid L)\,da\right].
\]
Let $T_g^{\dagger}$ be the adjoint to $T_g$ such that
\[
\langle q, T_g h\rangle_w = \langle T_g^{\dagger} q, h\rangle_w.
\]
For a shift $g(a)=a+\delta$,
\[
(T_g)^{\dagger} q(L,A)\;=\; q\!\big(L,g(A)\big)\,\frac{f_{A\mid L}(A-\delta\mid L)}{f_{A\mid L}(A\mid L)}.
\]
In particular, for $\tilde q := T_g^{\dagger}q$ we have the identity
\begin{equation}
\label{eq:adj_identity}
E[q(A,L)\,h\!\big(g(A,L),L\big)\,f_{A|L}(A|L)]
=
E\!\left[\tilde q(A,L)\,h(A,L)\,f_{A|L}(A|L)\right] \quad \forall h.
\end{equation}
Thus, $T_g^{\dagger}$ is essentially a change of variables operator for $g$.
 If $g(\cdot,l)$ is bijective and differentiable, then
\[
\tilde q(l,a)= q\!\big(l,g^{-1}(a,l)\big)\,
\frac{\pi\!\big(g^{-1}(a,l)\mid l\big)}{\pi(a\mid l)}\,
\big|\det J_a g^{-1}(a,l)\big|.
\]
If $g(\cdot,l)$ is possibly many-to-one and $A$ is discrete,
\[
\tilde q(l,a)=\sum_{a':\,g(a',l)=a} q(l,a')\,\frac{\pi(a'\mid l)}{\pi(a\mid l)}.
\]
\begin{theorem}(Neyman orthogonal estimator, point exposure)\label{theorem1}
\label{thm:point_orthogonality_full}
Assume Consistency, Positivity, MTP Exchangeability, and regularity conditions.
Consider the score
\begin{equation}
\label{eq:point_score}
\phi(O;\psi,\eta)
=\underbrace{(q-\tilde q)\,(Y-\mu(A,L))}_{\text{augmentation}}
+\underbrace{q\{\mu(A,L)-\mu(g(A,L),L)-\gamma_g(L,A;\psi)\}}_{\text{identifying}},
\quad \eta=(\mu,\pi).
\end{equation}
Then:
\begin{enumerate}
\item \textbf{Identification.} At $(\psi^\ast,\eta^\ast)$,
\(
E[\phi(O;\psi^\ast,\eta^\ast)]=0.
\)
\item \textbf{Neyman orthogonality.}
For any regular parametric submodel $t\mapsto \eta_t=(\mu_t,\pi_t)$ with $\eta_0=\eta^\ast$,
\[
\left.\frac{d}{dt}\right|_{t=0} E\!\left[\phi(O;\psi^\ast,\eta_t)\right]=0.
\]
\item \textbf{Asymptotic variance} Assume that $\hat\eta$ is learned on held-out folds and $||\hat\eta-\eta^{*}||_{L_2}=o_p(1)$.
 If $\dim(q)=\dim(\psi)=d$ and $\hat\psi$ solves
$\frac{1}{n}\sum_{i=1}^n \hat\phi_i(\psi)=0$, then for
\[
G:=E[\frac{\partial}{\partial\psi^\top}\phi(O;\psi^{*},\eta^{*})]\in R^{d\times d} 
\]
and
\[
\Sigma:=Var(\phi(O;\psi^{*},\eta^{*})\big)\in R^{d\times d},
\]
\[
\sqrt{n}(\hat\psi-\psi^{*}) \;\rightsquigarrow\; \mathcal N\!\big(0,\; V\big),
\qquad 
IF(O)= -G^{-1}\phi(O;\psi^{*},\eta^{*}),\quad
V = G^{-1}\Sigma (G^{-1})^\top.
\]
\end{enumerate}
\end{theorem}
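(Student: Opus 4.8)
\emph{Proof proposal.} \textbf{Part 1} is immediate: the augmentation term has conditional mean zero, since $E\big[(q-\tilde q)(Y-\mu^\ast(A,L))\mid A,L\big]=(q-\tilde q)\,E[Y-\mu^\ast(A,L)\mid A,L]=0$, while the point-exposure identification formula (\ref{pe_ex_id}) gives $\gamma^g(L,A)=\mu^\ast(A,L)-\mu^\ast(g(A,L),L)$, so the identifying bracket vanishes pointwise at $\psi^\ast$. For \textbf{Part 2}, I would first evaluate the expected score at $\psi^\ast$ but an \emph{arbitrary} nuisance $\eta=(\mu,\pi)$, being careful that the augmentation weight $\tilde q=\tilde q_\pi$ is the image of $q$ under the adjoint operator $T_g^{\dagger}$ built from $\pi$, and writing $\tilde q^\ast:=\tilde q_{\pi^\ast}$ for the version built from the true treatment density. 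Substituting $E[Y\mid A,L]=\mu^\ast$ into the augmentation term and applying the adjoint identity (\ref{eq:adj_identity}) — i.e.\ $E[q(A,L)\,h(g(A,L),L)]=E[\tilde q^\ast(A,L)\,h(A,L)]$ — to $E[q\,\mu(g(A,L),L)]$ and $E[q\,\mu^\ast(g(A,L),L)]$ in the identifying term, a short cancellation collapses the whole expression to the doubly-robust product form
\begin{equation*}
E[\phi(O;\psi^\ast,\eta)]=E\big[(\tilde q_\pi-\tilde q^\ast)(\mu-\mu^\ast)\big].
\end{equation*}

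Given this identity, Neyman orthogonality is just the product rule: along any regular submodel $t\mapsto\eta_t$ with $\eta_0=\eta^\ast$, both factors $\tilde q_{\pi_t}-\tilde q^\ast$ and $\mu_t-\mu^\ast$ vanish at $t=0$, so $\frac{d}{dt}\big|_{t=0}E[\phi(O;\psi^\ast,\eta_t)]=0$. Equivalently, one can check the two Gateaux derivatives separately: the $\mu$-direction cancels via the adjoint identity, and the $\pi$-direction cancels because the perturbation $\dot{\tilde q}$ multiplies the mean-zero residual $Y-\mu^\ast$.

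\textbf{Part 3} is the standard cross-fit $Z$-estimator argument \citep{chernozhukov2018double}. Because the $\eta$-dependent pieces of $\phi$ are free of $\psi$, $\partial_{\psi^\top}\phi(O;\psi,\eta)=-q\,\partial_{\psi^\top}\gamma_g(L,A;\psi)$ does not depend on $\eta$; a Taylor expansion of $0=\mathbb{P}_n\hat\phi(\hat\psi)$ about $\psi^\ast$ gives $\sqrt n(\hat\psi-\psi^\ast)=-\big(\mathbb{P}_n\partial_{\psi^\top}\hat\phi(\bar\psi)\big)^{-1}\sqrt n\,\mathbb{P}_n\hat\phi(\psi^\ast)$, and the Jacobian converges in probability to $G$ by a uniform law of large numbers with consistency of $\hat\psi$ (invertibility of $G$ and $\dim q=\dim\psi=d$ being part of the regularity conditions). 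For the leading term I would use the usual sample-splitting decomposition
\begin{equation*}
\sqrt n\,\mathbb{P}_n\hat\phi(\psi^\ast)=\sqrt n\,\mathbb{P}_n\phi(\cdot;\psi^\ast,\eta^\ast)+\sqrt n(\mathbb{P}_n-P)\{\hat\phi(\psi^\ast)-\phi(\cdot;\psi^\ast,\eta^\ast)\}+\sqrt n\,P\{\hat\phi(\psi^\ast)-\phi(\cdot;\psi^\ast,\eta^\ast)\},
\end{equation*}
where the first summand is asymptotically $\mathcal N(0,\Sigma)$ by the CLT, the second is $o_p(1)$ because, conditionally on the held-out fold, its variance is $O(\|\hat\eta-\eta^\ast\|_{L_2}^2)=o_p(1)$ by continuity of $\phi$ in $\eta$, and — by the Part 2 identity — the third equals $\sqrt n\,E\big[(\tilde q_{\hat\pi}-\tilde q^\ast)(\hat\mu-\mu^\ast)\big]$, which by Cauchy--Schwarz is $\sqrt n\,O_p\big(\|\hat\pi-\pi^\ast\|_{L_2}\|\hat\mu-\mu^\ast\|_{L_2}\big)=o_p(1)$ under the usual product-rate condition (subsumed in the regularity conditions, using $\|\tilde q_{\hat\pi}-\tilde q^\ast\|_{L_2}\lesssim\|\hat\pi-\pi^\ast\|_{L_2}$ for densities bounded away from $0$). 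Slutsky's theorem then yields $\sqrt n(\hat\psi-\psi^\ast)\rightsquigarrow\mathcal N(0,V)$ with $V=G^{-1}\Sigma(G^{-1})^\top$ and $IF(O)=-G^{-1}\phi(O;\psi^\ast,\eta^\ast)$.

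The only genuinely paper-specific difficulty is that the augmentation weight $\tilde q$ is itself a functional of the nuisance $\pi$ through the change-of-variables/adjoint operator, so both the orthogonality computation and the remainder bound must track perturbations of $\mu$ and $\pi$ simultaneously; what makes everything close is the factorization $E[\phi(O;\psi^\ast,\eta)]=E[(\tilde q_\pi-\tilde q^\ast)(\mu-\mu^\ast)]$, which is exactly the payoff of the adjoint identity (\ref{eq:adj_identity}). Making this rigorous — well-definedness of $T_g^{\dagger}$ and finiteness of the density ratios and Jacobians appearing in $\tilde q$ — is where the $g$-positivity assumption (\ref{pe_positivity}) and the regularity conditions enter; the rest is routine double machine learning bookkeeping, the one caveat being that Part 3 really needs an $n^{-1/4}$-type rate on $\hat\eta$ rather than mere $L_2$-consistency.
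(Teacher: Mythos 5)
Your proposal is correct, and for Part 2 it takes a genuinely different (and in some ways stronger) route than the paper. The paper's appendix proof computes the two Gateaux derivatives separately: for the $\mu$-direction it shows the derivative of the augmentation term, $-\langle q-\tilde q,\delta\mu\rangle_w$, exactly cancels the derivative of the identifying term, $\langle q-\tilde q,\delta\mu\rangle_w$, via the adjoint identity; for the $\pi$-direction it conditions on $(A,L)$ and uses $E[Y-\mu^\ast\mid A,L]=0$ to kill both the $\dot{\tilde q}$ and $\dot w$ terms. You instead derive the exact finite-perturbation identity $E[\phi(O;\psi^\ast,\eta)]=E[(\tilde q_\pi-\tilde q^\ast)(\mu-\mu^\ast)]$ (which I have verified follows from substituting $E[Y\mid A,L]=\mu^\ast$, the SNMM identity $\gamma_g=\mu^\ast-T_g\mu^\ast$, and the adjoint identity applied with the \emph{true} density), and then read off orthogonality from the product rule. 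This buys more than the paper's argument: it exhibits the full second-order remainder, which is exactly what your Part 3 needs to bound the bias term $\sqrt n\,P\{\hat\phi(\psi^\ast)-\phi(\cdot;\psi^\ast,\eta^\ast)\}$ by Cauchy--Schwarz. Note also that the paper's appendix proof stops after orthogonality and never actually proves Part 3, so your cross-fitting/Taylor-expansion argument supplies a step the paper omits; your caveat that the theorem's stated $o_p(1)$ $L_2$-consistency must be strengthened to a product-rate (e.g.\ $\|\hat\pi-\pi^\ast\|_{L_2}\|\hat\mu-\mu^\ast\|_{L_2}=o_p(n^{-1/2})$) condition for root-$n$ asymptotic normality is a fair and accurate criticism of the theorem as stated.
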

\begin{proof}
See Appendix \ref{appendix:proof1}.
\end{proof}

Algorithm \ref{algorithm1} computes the estimator from Theorem \ref{theorem1}.
Note that the theorem requires only that $T_g^\dagger$ satisfies the density-weighted adjoint identity (\ref{eq:adj_identity}).
The familiar formulas for the settings where $g$ is a continuous bijection or discrete are convenient instances but not required for the theorem to hold. 

\begin{algorithm}[H]
\caption{Cross-fitted Neyman–orthogonal estimator for a point-exposure MTP}
\begin{algorithmic}[1]
\Require Data $\{(Y_i,A_i,H_i)\}_{i=1}^n$; basis $s(H)$; shift $\delta$; folds $K$
\Ensure $\hat\psi$ and IF-sandwich covariance $\widehat V$
\State Partition indices into $K$ folds $(\mathcal T^{(k)},\mathcal I^{(k)})$
\State Initialize $M\gets 0$, $b\gets 0$
\For{$k=1,\dots,K$}
  \Comment{Nuisance fits on train fold}
  \State Fit $\hat{\mu}$ on $\{(Y_j,A_j,H_j)\}_{j\in\mathcal T^{(k)}}$ for $y\approx E[Y\mid A,H]$
  \State Fit $\hat m$ on $\{(A_j,H_j)\}_{j\in\mathcal T^{(k)}}$ for $a\approx E[A\mid H]$
  \State Estimate $\hat\sigma$ as SD of residuals $A-\hat m(H)$ on $\mathcal T^{(k)}$ (if using Normal ratio)
  \ForAll{$i\in\mathcal I^{(k)}$}
    \State $e_i \gets Y_i - \hat\mu(A_i,H_i)$;\ \ $\mu^g_i \gets \hat\mu(A_i+\delta,H_i)$
    \State $s_i \gets s(H_i)$;\ \ $q_i \gets s_i$
    \If{Normal working residual for $A\!\mid\!H$}
       \State $\hat r_i \gets \dfrac{\phi(A_i-\delta;\ \hat m(H_i),\hat\sigma)}{\phi(A_i;\ \hat m(H_i),\hat\sigma)}$
    \Else
       \State Estimate $\hat r_i \approx \dfrac{\hat f_{A\mid H}(A_i-\delta\mid H_i)}{\hat f_{A\mid H}(A_i\mid H_i)}$
    \EndIf
    \State $\tilde q_i \gets q_i \cdot \hat r_i$
    \Comment{Accumulate normal equations $M\hat\psi=b$}
    \State $M \gets M + \delta\, s_i s_i^\top$
    \State $b \gets b - \big[s_i\{\hat\mu(A_i,H_i)-\mu^g_i\} + (q_i-\tilde q_i)\,e_i\big]$
  \EndFor
\EndFor
\State \textbf{Solve and variance:}
\State $\hat\psi \gets (M+\lambda I)^{-1} b$ \Comment{optional small ridge $\lambda\ge 0$}
\State $G \gets (\delta/n)\sum_{i=1}^n s_i s_i^\top$
\State $\phi_i \gets (q_i-\tilde q_i)e_i + s_i\{\hat\mu(A_i,H_i)-\hat\mu(A_i+\delta,H_i) + \delta\, s_i^\top \hat\psi\}$
\State $\mathrm{IF}_i \gets G^{-1}\phi_i$;\ \ $\widehat V \gets \mathrm{Var}_n(\mathrm{IF}_i)/n$
\end{algorithmic}\label{algorithm1}
\end{algorithm}

\subsection{Time varying treatments}
For didactic purposes, consider identification in the two time point case. We extend the results by induction to the general time varying case in the Appendix. We know that the one time step ahead effects $\gamma_{12}^g(a_1,h_1)$ and $\gamma_{01}^g(a_0,l_0)$ are identified from the point exposure case. So we will focus on identification of $\gamma_{02}^g(a_0,l_0)$.
\begin{align*}
&\gamma_{02}^g(a_0,l_0)=E[Y_2(a_0,g_1)-Y_2(g_0,g_1)|L_0=l_0,A_0=a_0]
\end{align*}
By iterated expectations and Lemma \ref{lemma1},
\begin{align*}
&E[Y_2(a_0,g_1)|A_0=a_0,L_0=l_0]\\
&=E[E[Y_2(a_0,g_1)|A_0=a_0,L_0=l_0,L_1,A_1]|A_0=a_0,L_0=l_0]\\
&=E[E[Y_2-\gamma_{12}^g(A_1,H_1)|A_0=a_0,L_0=l_0,L_1,A_1]|A_0=a_0,L_0=l_0]\\
&=E[Y_2-\gamma_{12}^g(A_1,H_1)|A_0=a_0,L_0=l_0]
\end{align*}
Thus, the first term of $\gamma_{02}^g(a_0,l_0)$ is identified. The second term is identified as
\begin{align*}
&E[Y_2(g_0,g_1)|L_0=l_0,A_0=a_0]=\\
&\sum_{l_1,a_1}E[Y_2|L_0=l_0,A_0=g(l_0,a_0),L_1=l_1,A_1=g(l_0,g(l_0,a_0),a_1)]\\&p(l_1|L_0=l_0,A_0=g(l_0,a_0))p(a_1|L_0=l_0,L_1=l_1,A_0=g(l_0,a_0))\\
&\equiv\mu_{02}(l_0,g(l_0,a_0))
\end{align*}
by the extended g-formula under the assumptions of Section 5 of \citep{richardson2013single}. Following \citet{diaz2023nonparametric}, $\mu_{02}(l_0,a_0)$ can be represented in terms of iterative regressions by 
\begin{equation}
    \mu_{02}(l_0,a_0)=E[\mu_{12}(l_0,L_1,g(l_0,a_0),g(l_0,L_1,g(l_0,a_0),A_1))|L_0=l_0,A_0=a_0]
\end{equation}
with $\mu_{12}(l_0,l_1,a_0,a_1)=E[Y_2|L_0=l_0,L_1=l_1,A_0=a_0,A_1=a_1]$. Then for parametric model $(\gamma_{01}^g(l_0,a_0), \gamma_{12}^g(h_1,a_1),\gamma_{02}^g(l_0,a_0))=(\gamma_{01}^g(l_0,a_0;\psi^{*}), \gamma_{12}^g(h_1,a_1;\psi^{*}),\gamma_{02}^g(l_0,a_0;\psi^{*}))$, parameter $\psi^*$ satisfies
\begin{align}
    E\begin{bmatrix}
q_{01}(L_0,A_0)\{Y_1-\gamma_{01}^g(L_0,A_0;\psi^*)-\mu_{01}(L_0,g(L_0,A_0))\}\\
q_{12}(H_1,A_1)\{Y_1-\gamma_{12}^g(H_1,A_1;\psi^*)-\mu_{12}(H_1,g(H_1,A_1))\}\\
q_{02}(L_0,A_0)\{Y_1-\gamma_{12}^g(H_1,A_1;\psi^*)-\gamma_{02}^g(L_0,A_0;\psi^*)-\mu_{02}(L_0,g(L_0,A_0))\}
    \end{bmatrix}=0,
\end{align}
which again leads to natural estimating equations. In particular, a consistent and asymptotically normal estimator of $\psi^*$ can be obtained by solving
\begin{equation}
    \mathbb{P}_N\begin{bmatrix}b_{01}(L_0,A_0)\{Y_1-\mu_{01}(L_0,A_0;\beta)\\
    b_{12}(\bar{L}_1,\bar{A}_1)\{Y_2-\mu_{12}(\bar{L}_1,\bar{A}_1;\beta)\\
    b_{02}(L_0,A_0)\{\mu_{12}(L_0,L_1,g(L_0,A_0),g(L_0,L_1,g(L_0,A_0),A_1))-\mu_{02}(L_0,A_0;\beta)\}\\
    q_{01}(L_0,A_0)\{Y_1-\gamma_{01}^g(L_0,A_0;\psi)-\mu_{01}(L_0,g(L_0,A_0))\}\\
q_{12}(H_1,A_1)\{Y_1-\gamma_{12}^g(H_1,A_1;\psi)-\mu_{12}(H_1,g(H_1,A_1))\}\\
q_{02}(L_0,A_0)\{Y_1-\gamma_{12}^g(H_1,A_1;\psi)-\gamma_{02}^g(L_0,A_0;\psi)-\mu_{02}(L_0,g(L_0,A_0))\}
    \end{bmatrix}=0
\end{equation}
where $\mu_{mk}(\bar{A}_m,\bar{L}_m;\beta)$ are models for $\mu_{mk}(\bar{A}_m,\bar{L}_m)$ smooth in an $r$-dimensional parameter $\beta$ (equal to $\beta^*$ under the true law) and $q_{mk}(\bar{A}_m,\bar{L}_m)$ and $b_{mk}(\bar{A}_m,
\bar{L}_m)$ are conformable analyst selected index functions.

As in the point exposure setting, a Neyman orthogonal estimator would be desirable to enable machine learning estimation of nuisance functions. We now develop this estimator for the general time-varying setting. Define the one step ahead conditional mean
\[
\mu_t(a,h)\equiv E[V_{t+1}|A_t=a,H_t=h]
\]
where we set $V_T = Y$ and recursively set for $t=T-1,\ldots,0$
\[
V_t = \mu_t(g(H_t,A_t),H_t).
\]
Now, for each time $t$, define the operator $T_{gt}h\equiv h(H_t,g(H_t,A_t))$. And let $T_{gt}^{\dagger}$ be its adjoint with respect to
\[
\langle u,v\rangle_{w_t} \equiv E[u(H_t,A_t)v(H_t,A_t)w_t]
\]
for $w_t=\pi(A_t|H_t)$. As in the point exposure setting, let $\tilde{q}_t \equiv T_{gt}^{\dagger}q_t$.

\begin{theorem}[Neyman orthogonal estimator, time-varying]\label{theorem2}
Assume Sequential consistency, positivity, and MTP exchangeability (i.e. the assumptions of Section 5 of \citet{richardson2013single}) hold. Consider the score
\[
\Phi(O;\psi,\eta)=\sum_{t=0}^{T-1}\phi_t(O;\psi,\eta),
\]
with
\[
\phi_t
=\big(q_t-\tilde q_t\big)\,\{V_{t+1}-\mu_t(H_t,A_t)\}
+q_t\Big\{\mu_t(H_t,A_t)-\mu_t(H_t,g_t(A_t,H_t))-\gamma_t(H_t,A_t;\psi)\Big\},
\]
where $\eta=\{\mu_t,\pi_t,\tilde q_t: t=0,\dots,T-1\}$.  
\begin{enumerate}
    \item \textbf{Identification} At the truth $(\psi^\ast,\eta^\ast)$, $\mathbb E[\Phi(O;\psi^\ast,\eta^\ast)]=0$
    \item \textbf{Neyman orthogonality} The score is Neyman-orthogonal: the pathwise derivative of $\mathbb E[\Phi(O;\psi^\ast,\eta)]$ with respect to each nuisance $\mu_t$ and $\pi_t$ vanishes at $\eta^\ast$.
    \item \textbf{Asymptotic variance} Suppose $\psi$ is estimated by the root $\hat\psi$ of the cross-fitted sample moment $n^{-1}\sum_{i=1}^n \hat\phi_i(\psi)=0$, with nuisances learned on held-out folds and consistent in $L_2$. 
Let
\[
G:=E\!\left[\frac{\partial}{\partial\psi^\top}\phi(O;\psi^\ast,\eta^\ast)\right],\qquad
\Sigma:=Var\!\big(\phi(O;\psi^\ast,\eta^\ast)\big).
\]
Then under standard regularity conditions,
\[
\sqrt{n}(\hat\psi-\psi^\ast)\;\overset{d}{\to}\; \mathcal N\!\big(0,\; V\big),
\qquad 
\text{with IF } \ IF(O)= -\,G^{-1}\phi(O;\psi^\ast,\eta^\ast),
\]
and
\[
V \;=\; G^{-1}\,\Sigma\,(G^{-1})^\top.
\]
\end{enumerate}
\end{theorem}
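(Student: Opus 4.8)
The plan is to prove the three claims in order, reusing the point-exposure construction of Theorem~\ref{theorem1} and a backward induction over $t=T-1,\dots,0$.
\emph{Identification.} I would show $\mathbb E[\phi_t(O;\psi^\ast,\eta^\ast)]=0$ for each $t$, so that $\mathbb E[\Phi(O;\psi^\ast,\eta^\ast)]=\sum_t\mathbb E[\phi_t]=0$. Two facts are needed. First, at $\eta^\ast$ the recursion gives $\mathbb E[V_{t+1}-\mu_t(H_t,A_t)\mid H_t,A_t]=0$, and $q_t-\tilde q_t$ is a function of $(H_t,A_t)$, so the augmentation piece of $\phi_t$ has mean zero. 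Second, a backward induction establishes the sequential-regression/MTP-counterfactual identities
\[
\mu_t^\ast(H_t,A_t)=\mathbb E[Y(\bar A_t,\underline g)\mid H_t,A_t],\qquad \mu_t^\ast(H_t,g_t(H_t,A_t))=\mathbb E[Y(\bar A_{t-1},\underline g)\mid H_t,A_t];
\]
the base case $t=T-1$ is consistency together with the point-exposure identification $\mathbb E[Y(g)\mid A,L]=\mathbb E[Y\mid A=g(A,L),L]$ recorded in Section~\ref{section_exchangeability}, and the inductive step applies that same point-exposure identity at time $t$ followed by iterated expectations over $(L_{t+1},A_{t+1})$ given $(H_t,A_t)$ — the iterated-regression argument already used for the two-time-point case in the excerpt. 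Subtracting the two displayed identities yields $\mu_t^\ast(H_t,A_t)-\mu_t^\ast(H_t,g_t(H_t,A_t))=\gamma^g_t(H_t,A_t)$, the SNMM contrast~\eqref{snmm}, which equals $\gamma_t(H_t,A_t;\psi^\ast)$ by~\eqref{snmm_par}; hence the identifying piece of $\phi_t$ is identically zero at $(\psi^\ast,\eta^\ast)$.

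\emph{Neyman orthogonality.} I would differentiate $\eta\mapsto\mathbb E[\Phi(O;\psi^\ast,\eta)]$ along an arbitrary regular submodel $s\mapsto\eta_s$ with $\eta_0=\eta^\ast$ and show the derivative vanishes, handling $\pi_t$- and $\mu_t$-perturbations in turn. A perturbation of $\pi_t$ enters $\Phi$ only through $\tilde q_t$ in the augmentation term of $\phi_t$, so the derivative equals $-\mathbb E[(\partial_s\tilde q_t)\,\mathbb E[V_{t+1}-\mu_t^\ast\mid H_t,A_t]]=0$ since $\partial_s\tilde q_t$ is $(H_t,A_t)$-measurable and the residual is conditionally centered. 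For a perturbation $\delta$ of $\mu_t$, collecting the three places $\mu_t$ appears in $\phi_t$ gives the effective term $\tilde q_t\,\mu_t(H_t,A_t)-q_t\,\mu_t(H_t,g_t(H_t,A_t))$, whose derivative has expectation $\mathbb E[\tilde q_t\,\delta(H_t,A_t)]-\mathbb E[q_t\,\delta(H_t,g_t(H_t,A_t))]=0$ by the defining adjoint identity~\eqref{eq:adj_identity} for $\tilde q_t=T_{gt}^{\dagger}q_t$ — this is precisely why the augmentation is built from the adjoint. The delicate point is that $\mu_t$ also enters $\phi_{t-1}$ through the pseudo-outcome $V_t=\mu_t(g_t(H_t,A_t),H_t)$; I would treat this by propagating the induced perturbation down the recursion defining $(V_s,\mu_s)_{s<t}$ and showing that the accumulated first-order terms telescope away (equivalently, that the second-order remainder at $\hat\eta$ decomposes into a sum of products of nuisance errors at adjacent times), again using conditional centering — the analogue for SNMMs of the sequential-robustness calculation for MTP $g$-computation. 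Tracking which residuals remain conditionally centered after a perturbation, and how contributions at adjacent times combine, is the step I expect to be the main obstacle.

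\emph{Asymptotic variance.} With Parts 1--2 in place this is the standard cross-fitted $Z$-estimator argument. I would (i) use sample-splitting so each $\hat\phi_i$ uses nuisances fitted on the complementary fold, which makes the empirical-process remainder $o_p(n^{-1/2})$ without Donsker conditions; (ii) expand $0=n^{-1}\sum_i\hat\phi_i(\hat\psi)$ about $\psi^\ast$, where the $\psi$-gradient term converges to $G$ and the nuisance term is $o_p(n^{-1/2})$ by Neyman orthogonality together with a Cauchy--Schwarz bound on the remaining second-order (product) remainder under $\|\hat\eta-\eta^\ast\|_{L_2}=o_p(1)$ and the assumed rate conditions; (iii) apply the central limit theorem to $n^{-1/2}\sum_i\phi(O_i;\psi^\ast,\eta^\ast)$ to get $\sqrt n(\hat\psi-\psi^\ast)\rightsquigarrow\mathcal N(0,\,G^{-1}\Sigma(G^{-1})^\top)$ with influence function $-G^{-1}\phi(O;\psi^\ast,\eta^\ast)$; and (iv) note that $\widehat G$ and $\widehat\Sigma$ assembled from the cross-fitted scores converge in probability to $G$ and $\Sigma$ by a uniform law of large numbers, so the sandwich variance is consistent.
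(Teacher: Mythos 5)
Your proposal tracks the paper's proof closely on every step the paper actually carries out: identification is obtained term by term (conditional centering of the augmentation residual plus the SNMM restriction on the identifying bracket), orthogonality in $\mu_t$ follows from the exact cancellation between the augmentation derivative $-\langle q_t-\tilde q_t,\delta\mu_t\rangle_{w_t}$ and the identifying-term derivative via the adjoint identity, and orthogonality in $\pi_t$ follows because the perturbation enters only through $\tilde q_t$, which is $(H_t,A_t)$-measurable and multiplies a conditionally centered residual. In two places you are actually more complete than the paper: you derive the restriction $\mu_t^\ast(H_t,A_t)-\mu_t^\ast(H_t,g_t(H_t,A_t))=\gamma_t^g(H_t,A_t)$ from a backward induction giving the counterfactual meaning of $\mu_t$ (the paper simply invokes ``the longitudinal SNMM restriction''), and you supply the cross-fitted $Z$-estimator argument for Part 3, which the appendix omits entirely.

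The one step you flag as unresolved --- the dependence of $\phi_t$ on $\mu_{t+1}$ through the pseudo-outcome $V_{t+1}=\mu_{t+1}(H_{t+1},g_{t+1}(H_{t+1},A_{t+1}))$ --- is genuinely the crux, and you should be aware that the paper does not resolve it either: it simply declares that ``$V_{t+1}$ is treated as fixed when differentiating $\phi_t$ in its own nuisances,'' which amounts to restricting the class of nuisance perturbations rather than proving orthogonality over the full nuisance set $\{\mu_s,\pi_s\}$. If you perturb $\mu_{t+1}$ alone by $\delta$, the induced derivative of $E[\phi_t]$ is
\[
E\big[(q_t-\tilde q_t)\,\delta\big(H_{t+1},g_{t+1}(H_{t+1},A_{t+1})\big)\big]
= E\big[q_t\{m(H_t,A_t)-m(H_t,g_t(H_t,A_t))\}\big],
\qquad m(h,a):=E\big[\delta(H_{t+1},g_{t+1})\,\big|\,H_t=h,A_t=a\big],
\]
which does not vanish in general (it is zero only when $m$ does not vary with $a_t$, e.g.\ when $A_{t+1},L_{t+1}$ are independent of $A_t$); and the ``coherent'' perturbation that also shifts $\mu_t$ by $m$ does not telescope this away, because the identifying bracket then contributes an uncancelled $E[(q_t-\tilde q_t)m]$. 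So your hoped-for telescoping argument will not go through as stated; you would need either to restrict the nuisance tangent space as the paper implicitly does (and then argue separately that the resulting second-order remainder of the cross-fitted estimator involves only products of nuisance errors), to replace $V_{t+1}$ by the blipped-down observed outcome $Y-\sum_{j>t}\gamma_j(H_j,A_j;\psi)$ (which has the same conditional mean given $(H_{t+1},A_{t+1})$ by Lemma~\ref{lemma1} and removes the cross-time nuisance dependence), or to modify the weights so that they accumulate across time as in the efficient influence function of \citet{diaz2023nonparametric}. Your instinct that this is ``the main obstacle'' is correct; it is also the weakest point of the paper's own proof.
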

\begin{proof}
    See Appendix \ref{appendix:proof2}
\end{proof}

Algorithm \ref{algorithm2} implements the estimator from Theorem \ref{theorem2}.

\begin{algorithm}[H]
\caption{Cross-fitted Neyman–orthogonal estimator for longitudinal MTPs}
\begin{algorithmic}[1]
\Require Data $\{(Y_i,\{A_{t,i},H_{t,i}\}_{t=0}^{T-1})\}_{i=1}^n$; bases $s_t(H_t)$; shifts $\{\delta_t\}$; folds $K$
\Ensure $\hat\psi=(\hat\psi_0^\top,\dots,\hat\psi_{T-1}^\top)^\top$ and IF-sandwich $\widehat V$
\State Initialize $V_{T,i}\gets Y_i$ for all $i$;\quad $M_t\gets 0$, $b_t\gets 0$ for $t=0,\dots,T\!-\!1$
\State Partition indices into $K$ folds $(\mathcal T^{(k)},\mathcal I^{(k)})$
\For{$t=T-1,\dots,0$}
  \For{$k=1,\dots,K$}
    \Comment{Nuisance fits on train fold at time $t$}
    \State Fit $\hat\mu_t$ on $\{(V_{t+1,j},A_{t,j},H_{t,j})\}_{j\in\mathcal T^{(k)}}$ for $v\approx E[V_{t+1}\mid A_t,H_t]$
    \State Fit $\hat m_t$ on $\{(A_{t,j},H_{t,j})\}_{j\in\mathcal T^{(k)}}$ for $a\approx E[A_t\mid H_t]$
    \State Estimate $\hat\sigma_t$ as SD of $A_t-\hat m_t(H_t)$ on $\mathcal T^{(k)}$ (if using Normal ratio)
    \ForAll{$i\in\mathcal I^{(k)}$}
      \State $e_{t,i} \gets V_{t+1,i} - \hat\mu_t(A_{t,i},H_{t,i})$;\ \ $\mu^g_{t,i} \gets \hat\mu_t(A_{t,i}+\delta_t,H_{t,i})$
      \State $s_{t,i} \gets s_t(H_{t,i})$;\ \ $q_{t,i}\gets s_{t,i}$
      \If{Normal working residual}
         \State $\hat r_{t,i} \gets \dfrac{\phi(A_{t,i}-\delta_t;\ \hat m_t(H_{t,i}),\hat\sigma_t)}{\phi(A_{t,i};\ \hat m_t(H_{t,i}),\hat\sigma_t)}$
      \Else
         \State Estimate $\hat r_{t,i} \approx \dfrac{\hat f_{A_t\mid H_t}(A_{t,i}-\delta_t\mid H_{t,i})}{\hat f_{A_t\mid H_t}(A_{t,i}\mid H_{t,i})}$
      \EndIf
      \State $\tilde q_{t,i} \gets q_{t,i}\cdot \hat r_{t,i}$
      \Comment{Accumulate time-$t$ normal equations}
      \State $M_t \gets M_t + \delta_t\, s_{t,i}s_{t,i}^\top$
      \State $b_t \gets b_t - \big[s_{t,i}\{\hat\mu_t(A_{t,i},H_{t,i})-\mu^g_{t,i}\} + (q_{t,i}-\tilde q_{t,i})\,e_{t,i}\big]$
      \State $V_{t,i} \gets \mu^g_{t,i}$ \Comment{backward recursion target}
    \EndFor
  \EndFor
\EndFor
\State \textbf{Solve and variance:}
\For{$t=0,\dots,T-1$} \State $\hat\psi_t \gets (M_t+\lambda_t I)^{-1} b_t$ \EndFor
\State $G \gets \mathrm{diag}\!\big((\delta_0/n)\sum_i s_{0,i}s_{0,i}^\top,\dots,(\delta_{T-1}/n)\sum_i s_{T-1,i}s_{T-1,i}^\top\big)$
\State For each $i$: $\phi_{t,i} \gets (q_{t,i}-\tilde q_{t,i})e_{t,i} + s_{t,i}\{\hat\mu_t(A_{t,i},H_{t,i})-\hat\mu_t(A_{t,i}+\delta_t,H_{t,i}) + \delta_t\, s_{t,i}^\top \hat\psi_t\}$
\State Stack $\phi_i\gets(\phi_{0,i}^\top,\dots,\phi_{T-1,i}^\top)^\top$;\ \ $\mathrm{IF}_i \gets G^{-1}\phi_i$;\ \ $\widehat V \gets \mathrm{Var}_n(\mathrm{IF}_i)/n$
\end{algorithmic}\label{algorithm2}
\end{algorithm}

\section{Identification and Estimation Under Parallel Trends Assumptions}\label{section_pt}
Recall that in this section $\bar{L}_t$ excludes the most recent outcome $Y_t$. In the parallel trends setting, we present identification results and the natural accompanying parametric outcome regression based estimators. We leave derivation of Neyman orthogonal estimators under parallel trends for future work.
\subsection{Point Exposure Setting}
Again, we suppress time subscripts for treatment and the baseline covariate to reduce notational clutter in the point exposure setting. We propose the MTP parallel trends assumption
\begin{equation}\label{pe_pt}
    E[Y_1(g(a,l))-Y_0|A=a,L=l]=E[Y_1(g(a,l))-Y_0|A=g(a,l),L=l] \text{ for all } (a,l) \text{ with } p(a,l)>0.
\end{equation}
In words, the conditional expected counterfactual trend in the outcome under the MTP does not depend on whether treatment takes its natural or modified value. Under (\ref{pe_pt}), $\gamma^g(l,a)$ is identified as follows:
\begin{align*}
&\gamma^g(l,a)=E[Y_1-Y_1(g(a,l))|A=a,L=l]\\
    &=E[Y_1-Y_0|A=a,L=l]-E[Y_1(g(a,l))-Y_0|A=a,L=l]\\
    &=E[Y_1-Y_0|A=a,L=l]-E[Y_1(g(a,l))-Y_0|A=g(a,l),L=l]\\
    &=E[Y_1-Y_0|A=a,L=l]-E[Y_1-Y_0|A=g(a,l),L=l]
\end{align*}
Thus, MTPs can be studied under parallel trends assumptions. Furthermore, $\psi^*$ from (\ref{snmm_par}) satisfies
\begin{equation}
    E[q(L,A)(Y_1-Y_0-\gamma^g(L,A;\psi^*)-\mu^d(L,g(A,L)))]=0,
\end{equation}
with $\mu^d(l,a)=E[Y_1-Y_0|L=l,A=a]$, which again suggests a natural estimation procedure. A consistent and asymptotically normal estimator of $\psi^*$ can be obtained by solving the corresponding estimating equations 
\begin{equation} \mathbb{P}_N\begin{bmatrix}b(A,L)\{Y_1-Y_0-\mu^d(L,A;\beta)\}\\
q(A,L)\{Y_1-Y_0-\gamma^g(A,L;\psi)-\mu^d(L,A;\beta)\}\end{bmatrix}=0
\end{equation}
where $\mu^d(L,A;\beta)$ is a model for $\mu^d(L,A)$ smooth in an $r$-dimensional parameter $\beta$ (equal to $\beta^*$ under the true law) and $q(A,L)$ and $b(A,L)$ are conformable analyst selected index functions. 

\subsection{Time-Varying Treatments}
In the time-varying setting, we make the parallel trends assumptions
\begin{equation}
    E[Y_k(g)-Y_{k-1}(g)|A_m=a_m,H_m=h_m]=E[Y_k(g)-Y_{k-1}(g)|A_m=g(h_m,a_m),H_m=h_m] \text{ for all }h_m, a_m, k>m.
\end{equation}
In words, conditional expected future counterfactual trends under $g$ from time $m$ onwards do not depend on whether treatment at time $m$ took its natural or modified value. For the two time point setting, these assumptions can be enumerated
\begin{align}
\begin{split}
    E[Y_2(g)-Y_1(g(a_0,l_0))|A_0=a_0,L_0=l_0]=E[Y_2(g)-Y_1(g(l_0,a_0))|A_0=g(l_0,a_0),L_0=l_0]\\
    E[Y_1(g(l_0,a_0))-Y_0|A_0=a_0,L_0=l_0]=E[Y_1(g(l_0,a_0))-Y_0|A_0=g(l_0,a_0),L_0=l_0]\\
    E[Y_2(g)-Y_1|A_1=a_1,H_1=h_1]=E[Y_2(g)-Y_1|A_1=g(h_1,a_1),H_1=h_1]
    .
    \end{split}
\end{align}
We know that the one time step ahead effects $\gamma_{12}^g(a_1,h_1)$ and $\gamma_{01}^g(a_0,l_0)$ are identified from the point exposure case. Identification of $\gamma_{02}^g(a_0,l_0)$ can be demonstrated as follows. 
\begin{align*}
    &\gamma_{02}(a_0,l_0)=E[Y_2(a_0,g)-Y_2(g)|A_0=a_0,L_0=l_0]\\
    &=E[(Y_2(a_0,g)-Y_1(g))-(Y_2(g)-Y_1(g))|A_0=a_0,L_0=l_0]
\end{align*}
The first term of the conditional expectation on the right hand side is identified as
\begin{align*}
    E[Y_2(a_0,g)-Y_1(g)|A_0=a_0,L_0=l_0]=E[Y_2-\gamma_{12}(l_0,L_1,a_0,A_1)-(Y_1-\gamma_{01}(l_0,a_0))|A_0=a_0,L_0=l_0].
\end{align*}
The second term is identified as
\begin{align*}
   & E[Y_2(g)-Y_1(g)|A_0=a_0,L_0=l_0]\\
   & =E[Y_2(g)-Y_1(g)|A_0=g(a_0,l_0),L_0=l_0]\\
&=\int_{l_1,a_1}E[Y_2(g)-Y_1|A_0=g(a_0,l_0),A_1=a_1,L_0=l_0,L_1=l_1]p(l_1,a_1|A_0=g(a_0,l_0),L_0=l_0)da_1dl_1\\
    &=\int_{l_1,a_1}E[Y_2-Y_1|A_0=g(a_0,l_0),A_1=g(l_0,l_1,g(l_0,a_0),a_1),L_0=l_0,L_1=l_1]p(l_1,a_1|A_0=g(a_0,l_0),L_0=l_0)da_1dl_1\\
    &=E[\mu_{12}^d(l_0,L_1,g(a_0,l_0),g(l_0,L_1,g(l_0,a_0),A_1)|A_0=g(l_0,a_0),L_0=l_0]\\
    &\equiv\mu_{02}^d(g(l_0,a_0),l_0)
\end{align*}
where $\mu_{12}^d(l_0,l_1,a_0,a_1)=E[Y_2-Y_1|L_0=l_0,L_1=l_1,A_0=a_0,A_1=a_1]$. The derivation is simply
repeated applications of iterated expectations, parallel trends, and consistency. It follows that for parametric model $(\gamma_{01}^g(l_0,a_0), \gamma_{12}^g(h_1,a_1),\gamma_{02}^g(l_0,a_0))=(\gamma_{01}^g(l_0,a_0;\psi^{*}), \gamma_{12}^g(h_1,a_1;\psi^{*}),\gamma_{02}^g(l_0,a_0;\psi^{*}))$, parameter $\psi^*$ satisfies
\begin{align}
    E\begin{bmatrix}
q_{01}(L_0,A_0)\{Y_1-Y_0-\gamma_{01}^g(L_0,A_0;\psi^*)-\mu^d_{01}(L_0,g(L_0,A_0))\}\\
q_{12}(H_1,A_1)\{Y_2-Y_1-\gamma_{12}^g(H_1,A_1;\psi^*)-\mu^d_{12}(H_1,g(H_1,A_1))\}\\
q_{02}(L_0,A_0)\{Y_2-Y_1-\gamma_{12}^g(H_1,A_1;\psi^*)-\gamma_{02}^g(L_0,A_0;\psi^*)+\gamma_{01}^g(L_0,A_0;\psi^*)-\mu_{02}^d(L_0,g(L_0,A_0))\}
    \end{bmatrix}=0,
\end{align}
which again leads to natural estimating equations. In particular, a consistent and asymptotically normal estimator of $\psi^*$ can be obtained by solving
\begin{equation}
    \mathbb{P}_N\begin{bmatrix}b_{01}(L_0,A_0)\{Y_1-Y_0-\mu^d_{01}(L_0,A_0;\beta)\\
    b_{12}(\bar{L}_1,\bar{A}_1)\{Y_2-Y_1-\mu^d_{12}(\bar{L}_1,\bar{A}_1;\beta)\\
    b_{02}(L_0,A_0)\{\mu^d_{12}(L_0,L_1,g(L_0,A_0),g(L_0,L_1,g(L_0,A_0),A_1))-\mu^d_{02}(L_0,A_0;\beta)\}\\
    q_{01}(L_0,A_0)\{Y_1-Y_0-\gamma_{01}^g(L_0,A_0;\psi)-\mu^d_{01}(L_0,g(L_0,A_0))\}\\
q_{12}(H_1,A_1)\{Y_2-Y_1-\gamma_{12}^g(H_1,A_1;\psi)-\mu^d_{12}(H_1,g(H_1,A_1))\}\\
q_{02}(L_0,A_0)\{Y_2-Y_1-\gamma_{12}^g(H_1,A_1;\psi)-\gamma_{02}^g(L_0,A_0;\psi)+\gamma_{01}^g(L_0,A_0;\psi)-\mu_{02}^d(L_0,g(L_0,A_0))\}
    \end{bmatrix}=0
\end{equation}
where $\mu_{mk}^d(\bar{A}_m,\bar{L}_m;\beta)$ are models for $\mu_{mk}^d(\bar{A}_m,\bar{L}_m)$ smooth in an $r$-dimensional parameter $\beta$ (equal to $\beta^*$ under the true law) and $q_{mk}(\bar{A}_m,\bar{L}_m)$ and $b_{mk}(\bar{A}_m,
\bar{L}_m)$ are conformable analyst selected index functions. 

\subsection{Neyman orthogonal estimators}
To derive Neyman orthogonal estimators enabling machine learning estimation of nuisance functions, we proceed completely analogously to the (sequential) exchangeability case. In fact, we can simply substitute the difference $\Delta Y_t\equiv Y_t-Y_{t-1}$ for $Y_t$ and the regressions $\mu_{mk}^d(H_m,A_m)=E[\Delta Y_k|H_m,A_m]$ for $\mu_{mk}(H_m,A_m)$ in the scores from Theorems \ref{theorem1} and \ref{theorem2}. The proofs of those theorems then go through exactly as before. Here, we state the theorem for the general longitudinal parallel trends case.

\begin{theorem}[Neyman–orthogonal estimator under parallel trends, longitudinal]\label{thm:PT-long}
Set
\[
\mu_t^d(h_t,a_t):=\E[\Delta Y_t\mid H_t=h_t,A_t=a_t].
\]
For each $t$, equip functions with
\(
\langle u,v\rangle_t:=\E\!\big[\int u(H_t,a_t)v(H_t,a_t) f_{A_t\mid H_t}(a_t\mid H_t)\,da_t\big].
\)
Let $T_{g_t}$ act by $(T_{g_t} q_t)(h_t,a_t)=q_t(h_t,g_t(h_t,a_t))$ with adjoint $T_{g_t}^\dagger$ under $\langle\cdot,\cdot\rangle_t$.
Choose $q_t(H_t,A_t):=s_t(H_t)$ and set $\tilde q_t:=T_{g_t}^\dagger q_t$.
Let $\Gamma_t(H_t;\psi_t)$ be linear in $s_t(H_t)$.

Assume for each $t$: $\E[\Delta Y_t(0)\mid H_t,A_t]=\E[\Delta Y_t(0)\mid H_t]$. Further assume sequential consistency, positivity, and regularity conditions.

Define the time-$t$ score
\[
\phi_t
=(q_t-\tilde q_t)\{\Delta Y_t-\mu_t^d(H_t,A_t)\}
+ s_t(H_t)\{\mu_t^d(H_t,A_t)-\mu_t^d(H_t,g_t(H_t,A_t)) + \Gamma_t(H_t;\psi_t)\}.
\]
Let $\phi=(\phi_0^\top,\dots,\phi_{T-1}^\top)^\top$ and $\psi=(\psi_0^\top,\dots,\psi_{T-1}^\top)^\top$.

\noindent Then:
\begin{enumerate}\itemsep4pt
\item[(i)] \textbf{Identification.} For each $t$, there is a unique $\psi_t^*$ s.t.
\(
\E[s_t\{\mu_t^d(H_t,A_t)-\mu_t^d(H_t,g_t(H_t,A_t))+\Gamma_t(H_t;\psi_t)\}]=0
\),
and $\E[\phi_t(\psi_t^*,\eta_t^*)]=0$. Hence $\E[\phi(\psi^*,\eta^*)]=0$.

\item[(ii)] \textbf{Neyman orthogonality.}
For any regular paths $\eta_{t,\varepsilon}=(\mu_{t,\varepsilon}^d,f_{t,\varepsilon})$ through $\eta_t^*$,
\(
\left.\frac{d}{d\varepsilon}\E[\phi_t(\psi_t^*,\eta_{t,\varepsilon})]\right|_{\varepsilon=0}=0
\)
for each $t$.

\item[(iii)] \textbf{Asymptotic normality.}
Suppose $\psi$ is estimated by the root $\hat\psi$ of the cross-fitted sample moment $n^{-1}\sum_{i=1}^n \hat\phi_i(\psi)=0$, with nuisances learned on held-out folds and consistent in $L_2$. 
Let
\[
G:=E\!\left[\frac{\partial}{\partial\psi^\top}\phi(O;\psi^\ast,\eta^\ast)\right],\qquad
\Sigma:=Var\!\big(\phi(O;\psi^\ast,\eta^\ast)\big).
\]
Then under standard regularity conditions,
\[
\sqrt{n}(\hat\psi-\psi^\ast)\;\overset{d}{\to}\; \mathcal N\!\big(0,\; V\big),
\qquad 
\text{with IF } \ IF(O)= -\,G^{-1}\phi(O;\psi^\ast,\eta^\ast),
\]
and
\[
V \;=\; G^{-1}\,\Sigma\,(G^{-1})^\top.
\]
\end{enumerate}
\end{theorem}

\section{Simulations}\label{sec:sim}
\subsection{Point exposure}
We simulated data according to the data generating process:
\begin{align*}
    &L\sim N(0,1)\\
    &A|L \sim N(m(L),1) \text{ with }m(L)=\theta_0 + \theta_1 L + \theta_2 L^2\\
    &Y = \xi_0 +\xi_1L+\xi_2L^2+(\beta_0+\beta_1L)A+\epsilon;\text{ }\epsilon\sim N(0,1)
\end{align*}
We then sought to estimate the heterogeneous effects of the shift MTP $g(A,L)=A+\delta$. Are interested in the SNMM
\begin{equation*}
    \gamma(A,L) = E[Y|A,L] - E[Y|A+\delta,L] = -\delta(\beta_0+\beta_1)L 
\end{equation*}
Thus, we specify the correct parametric SNMM
\[
\gamma(A,L;\psi)=-\delta s(L)^T\psi
\]
with $s(L)=(1,L)$ and $\psi=(\beta_0,\beta1)$. We then define 
\[
\tilde{q}(a,l)=q(a-\delta,l)\frac{p(a-\delta|l)}{p(a|l)}
\]
and choose $q(A,L)=s(L)$. We set true parameter values to be:
\begin{align*}
    \theta = (0.2,0.8,-0.4); \text{  }\xi=(0.3,0.5,0.2);\text{  }\beta=(0.8,-0.6).
\end{align*}
We then estimated the effect of a shift by $\delta=0.5$ by solving the estimating equations of Theorem \ref{theorem1} with cross-fitting for data sets of sizes 400, 1000, and 3000. We used correctly specified parametric nuisance models. Table \ref{tab:pe_sim_results} shows that the estimator is unbiased and that sandwich confidence intervals had nominal coverage for each component of $\psi$. 

\begin{table}[htbp]
\centering
\caption{Monte Carlo performance — point-exposure continuous-shift orthogonal estimator (stable score; $R=200$).}
\label{tab:pe_sim_results}
\setlength{\tabcolsep}{4pt}
\sisetup{table-number-alignment=center}
\begin{tabular}{S[table-format=4.0] *{10}{S[table-format=1.3]}}
\toprule
& \multicolumn{5}{c}{$\psi_{0}$} & \multicolumn{5}{c}{$\psi_{1}$} \\
\cmidrule(lr){2-6}\cmidrule(lr){7-11}
{n} & {Bias} & {RMSE} & {EmpSD} & {Mean SE} & {Cov95} & {Bias} & {RMSE} & {EmpSD} & {Mean SE} & {Cov95} \\
\midrule
400  & 0.002 & 0.056 & 0.056 & 0.057 & 0.96 & 0.004 & 0.065 & 0.065 & 0.059 & 0.93 \\
1000 & 0.001 & 0.037 & 0.037 & 0.035 & 0.93 & 0.002 & 0.037 & 0.037 & 0.035 & 0.93 \\
3000 & 0.000 & 0.020 & 0.020 & 0.020 & 0.94 & -0.002 & 0.020 & 0.019 & 0.020 & 0.96 \\
\bottomrule
\end{tabular}

\begin{minipage}{0.92\linewidth}
\footnotesize
\textit{Notes:} Bias = mean of $\hat\psi-\psi^\star$; EmpSD = empirical SD of $\hat\psi$ across replications; 
Mean SE = average sandwich SE; Cov95 = empirical coverage of 95\% Wald CIs from sandwich SEs.
\end{minipage}
\end{table}

\subsection{Time-varying treatment}

We consider a two–time-point longitudinal setting with baseline covariate $L_0$, treatments $A_0,A_1\in\mathbb{R}$, intermediate covariate $L_1$, and outcome $Y\in\mathbb{R}$. The modified treatment policy (MTP) shifts each treatment by a constant
\[
g_0(a_0)=a_0+\delta_0,\qquad g_1(a_1)=a_1+\delta_1,
\]
with fixed $\delta_0,\delta_1>0$. The blip functions relative to $g=(g_0,g_1)$ are parameterized by low–dimensional bases
\[
s_0(H_0)=(1,L_0)^\top,\qquad s_1(H_1)=(1,L_1)^\top,
\]
and true parameters $\psi_0=(\psi_{0,0},\psi_{0,1})^\top$, $\psi_1=(\psi_{1,0},\psi_{1,1})^\top$, so that
\[
\gamma_0^g(H_0,A_0)\;=\;-\delta_0\,s_0(H_0)^\top\psi_0,\qquad
\gamma_1^g(H_1,A_1)\;=\;-\delta_1\,s_1(H_1)^\top\psi_1.
\]

\paragraph{Observed data–generating process (DGP).}
We calibrated our generating process such that we could derive the true parameter values for our blip model. Let $n$ i.i.d.\ observations be generated as follows; all noises are mutually independent and independent of past history.
\begin{align*}
L_0 &\sim \mathcal{N}(0,1),\\
A_0 \mid L_0 &\sim \mathcal{N}\!\big(\,0.4\,L_0,\; \sigma_{A_0}^2\,\big),\\
L_1 \mid (L_0,A_0) &:= \rho_0 + \rho_1 L_0 + \rho_2 A_0 + \nu,\qquad \nu\sim \mathcal{N}(0,\sigma_{L_1}^2),\\
A_1 \mid (L_0,L_1,A_0) &\sim \mathcal{N}\!\big(\,\kappa_0 + \kappa_2 L_0,\; \sigma_{A_1}^2\,\big),\\[2pt]
\mu_1(H_1,a_1) &:= b_1(L_0,L_1) + \{\psi_{1,0}+\psi_{1,1}L_1\}\,a_1,\\
Y \mid (H_1,A_1) &:= \mu_1(H_1,A_1) + \varepsilon,\qquad \varepsilon\sim \mathcal{N}(0,\sigma_Y^2),
\end{align*}
where $b_1(L_0,L_1)=\beta_{10}+\beta_{1L_1} L_1+\beta_{1L_0} L_0$ is a baseline outcome component. With this construction,
\[
\mu_1(H_1,a_1)-\mu_1\!\big(H_1,a_1+\delta_1\big)\;=\;-\delta_1\{\psi_{1,0}+\psi_{1,1}L_1\}\;=\;\gamma_1^g(H_1,A_1)
\]
holds by design for any $H_1$.

To ensure the time-$0$ blip identity also holds,
\[
\mu_0(H_0,a_0)-\mu_0\!\big(H_0,a_0+\delta_0\big)\;=\;-\delta_0\,s_0(H_0)^\top\psi_0\;=\;\gamma_0^g(H_0,A_0),
\]
we calibrate $\psi_0$ so that the slope in $a_0$ of
\[
\mathbb{E}\!\left[\,\mu_1\!\big(H_1,A_1+\delta_1\big)\,\middle|\,H_0,\ A_0=a_0\right]
\]
equals $s_0(H_0)^\top\psi_0$ for all $L_0$. Under the simple $A_1$ law above (mean depending on $L_0$ but not on $A_0$ or $L_1$), a short calculation yields
\[
\psi_{0,0}\;=\;\rho_2\,\beta_{1L_1}\;+\;\rho_2\,\psi_{1,1}\,(\kappa_0+\delta_1),\qquad
\psi_{0,1}\;=\;\rho_2\,\psi_{1,1}\,\kappa_2
\]
so that $\mu_0(H_0,a_0):=\mathbb{E}\!\left[\,\mu_1\!\big(H_1,A_1+\delta_1\big)\,\middle|\,H_0,\ A_0=a_0\right]$ is linear in $a_0$ with slope $s_0(H_0)^\top\psi_0$.

\paragraph{Default parameter values.}
In all experiments we use
\[
\delta_0=0.4,\quad \delta_1=0.5,\quad
\psi_1=(0.5,\,0.3),\quad
(\rho_0,\rho_1,\rho_2)=(0.1,\,0.6,\,0.8),
\]
\[
(\kappa_0,\kappa_2)=(0.2,\,0.35),\quad
(\beta_{10},\beta_{1L_1},\beta_{1L_0})=(0.25,\,0.5,\,0.2),
\]
and standard deviations $(\sigma_{A_0},\sigma_{L_1},\sigma_{A_1},\sigma_Y)=(1.0,\,0.5,\,1.0,\,1.0)$. The calibration then implies the \emph{true} time-$0$ blip coefficients
\[
\psi_{0,0}=0.8\times 0.5+0.8\times 0.3\times(0.2+0.5)=0.568,\qquad
\psi_{0,1}=0.8\times 0.3\times 0.35=0.084.
\]
Thus the overall target parameter vector is
\[
\psi^\star=\big(\psi_{0,0},\psi_{0,1},\psi_{1,0},\psi_{1,1}\big)^\top
=\big(0.568,\ 0.084,\ 0.5,\ 0.3\big)^\top.
\]

\paragraph{Results}
Table \ref{tab:tv_sim_results} displays the bias, bootstrap coverage, and empirical standard errors over 500 data sets of size n=1,000. We observe very low bias and near nominal coverage.

\begin{table}[htbp]
\centering
\caption{Monte Carlo (R=500) for longitudinal MTP estimator with continuous shifts; bootstrap 95\% coverage reported.}
\label{tab:tv_sim_results}
\setlength{\tabcolsep}{4pt}
\sisetup{table-number-alignment=center}
\begin{tabular}{S[table-format=4.0] *{12}{S[table-format=1.3]}}
\toprule
& \multicolumn{3}{c}{$\psi_{0,0}$} & \multicolumn{3}{c}{$\psi_{0,1}$} & \multicolumn{3}{c}{$\psi_{1,0}$} & \multicolumn{3}{c}{$\psi_{1,1}$} \\
\cmidrule(lr){2-4}\cmidrule(lr){5-7}\cmidrule(lr){8-10}\cmidrule(lr){11-13}
{n} & {Bias} & {RMSE} & {EmpSD} & {Bias} & {RMSE} & {EmpSD} & {Bias} & {RMSE} & {EmpSD} & {Bias} & {RMSE} & {EmpSD} \\
\midrule
1000 & -0.000 & 0.049 & 0.049 & -0.002 & 0.039 & 0.039 & 0.000 & 0.033 & 0.033 & 0.002 & 0.026 & 0.026 \\
\addlinespace[2pt]
\multicolumn{1}{r}{\textit{Boot 95\% cov.}} &
\multicolumn{3}{c}{\textit{0.93}} &
\multicolumn{3}{c}{\textit{0.94}} &
\multicolumn{3}{c}{\textit{0.96}} &
\multicolumn{3}{c}{\textit{0.96}} \\
\bottomrule
\end{tabular}

\end{table}

\section{Real Data Application: Effect of Shifting Workplace Mobility on Covid Incidence}\label{section:application}

\paragraph{Data.}
We linked the Google Community Mobility Reports \citep{google_mobility_landing_2022} (workplaces index; daily \% point deviation from the Jan~3--Feb~6,~2020 baseline) with New York Times county-level COVID-19 case counts \citep{nyt_covid_counties}. Units were counties. The exposure $A$ is the 7-day average of the workplace mobility index ending at $t_0$ (units: percentage points change from baseline mobility in the January 3- February 6 period). We defined the calendar date $t_0$ to be the date with a mobility measurement closest to June 1, 2020, breaking ties arbitrarily. We excluded 74 counties without an eligible $t_0$. Outcomes are future incident cases per 100,000 over the 7 days beginning 14 days after $t_0$. 

\paragraph{Intervention (MTP).}
We consider a constant shift policy $g(a)=a+\delta$ with $\delta=-5$ percentage points. This means that under this intervention each county's percentage point change in mobility score from Jan3-Feb6 would be shifted down 5 percentage points from its observed value.

\paragraph{Adjustment covariates.}
We adjusted for: population, Rural--Urban Continuum (RUC) score (1=most urban, 9=most rural), percent Black, percent Hispanic, Republican vote share in 2016, poverty rate, unemployment rate, uninsurance rate, land area, population density, and per-capita income.

\paragraph{Estimand and blip specification.}
We model the blip as linear in the RUC score:
\[
\Delta(H)\equiv \mathbb{E}\{Y(A+\delta)-Y(A)\mid H\}
\approx \delta\, s(H)^\top \psi,\qquad s(H)=(1,\ \mathrm{RUC})^\top.
\]
Here, $\psi_1$ is the average effect per 1 p.p.\ shift at RUC$=0$ (intercept on the RUC scale) and $\psi_2$ captures linear effect modification by RUC. With $\delta=-5$, a positive $\psi_2$ means that a decreasing mobility is more protective in more rural areas. Treating RUC as a nominal variable when it is really ordinal is a slight abuse. It surely pales in comparison to the omission of other covariates from the blip function. Keep in mind that this is an illustrative analysis.

\paragraph{Estimator.}
We used the point exposure Neyman-orthogonal estimator under exchangeability assumptions from Theorem \ref{theorem1}. Nuisance functions were estimated with gradient-boosted trees (XGBoost): $\mu(a,h)\approx \mathbb{E}[Y\mid A=a,H=h]$ and $m(h)\approx \mathbb{E}[A\mid H=h]$, with a Normal working residual for $A\mid H$ to form the density-ratio pullback $p(a-\delta\mid h)/p(a\mid h)$. We used $5$-fold cross-fitting. Standard errors come from the influence-function (IF) sandwich; Wald CIs are reported. Pointwise CIs for effects at specific RUC values use the delta method, i.e., $\widehat{\mathrm{Var}}\{\delta\,s(H)^\top\hat\psi\}=\delta^2 s(H)^\top \widehat{\mathrm{Var}}(\hat\psi)\, s(H)$.

\paragraph{Results.}
Estimated blip coefficients (per 1 p.p.\ shift):
\[
\hat\psi_1=14.3\quad(\mathrm{SE}=14.0;\ 95\%\ \mathrm{CI}:\ -13.2,\ 41.8),\qquad
\hat\psi_2=-3.1\quad(\mathrm{SE}=5.0;\ 95\%\ \mathrm{CI}:\ -13.0,\ 6.8).
\]
With the policy shift $\delta=-5$ p.p., the implied change in cases per 100{,}000 at a given RUC is
$\widehat{\Delta}(\mathrm{RUC})=\delta\,(\hat\psi_1+\hat\psi_2\,\mathrm{RUC})$.
Delta-method estimates (and 95\% CIs) at representative RUC values are:
\[
\begin{array}{lccc}
\text{RUC}=1: & \widehat{\Delta}=-7.24,\ \mathrm{SE}=38.08,\ \mathrm{CI}\ [-81.88,\ 67.40];\\
\text{RUC}=5: & \widehat{\Delta}=-20.79,\ \mathrm{SE}=22.46,\ \mathrm{CI}\ [-64.82,\ 23.24];\\
\text{RUC}=9: & \widehat{\Delta}=-34.34,\ \mathrm{SE}=41.65,\ \mathrm{CI}\ [-115.97,\ 47.29].
\end{array}
\]

\paragraph{Interpretation.}
Point estimates suggest that a 5 p.p.\ reduction in workplace mobility index relative to what was observed would have lowered future COVID-19 cases per 100{,}000, with larger reductions in more rural counties (more negative at higher RUC). However, all 95\% CIs include zero, and effects are imprecisely estimated in this purely illustrative analysis. 

\section{Discussion and Future Work}\label{sec:disc}
Estimating heterogeneous effects of MTPs with SNMMs can be both of scientific interest and important for planning realistic interventions. Suppose policy makers want to know the effect of a campaign to decrease opioid prescription dosing on some continuous quality of life utility measure. They believe the impact of the program might be approximated by the effect of an MTP: `prescribe a 20\% lower dose of opioids than you normally would'. One might be interested in how the effect of this intervention varies with the natural dose. Perhaps it turns out that reductions are more impactful when the natural dose is moderate, as opposed to low or high. Learning this might lead to hypotheses about the mechanism of opioid addiction that could be tested in further studies. The knowledge might also help to design targeted opioid reduction interventions. 

Furthermore, economists who commonly apply DiD can now study MTPs under similar parallel trends assumptions. For example, they can estimate the effect of increasing minimum wage by 1 dollar more than it actually increased, and how this effect varies with the amount it actually increased. This is not an estimand that can be targeted by standard DiD.

There are many directions for future work. First, SNMMs can be estimated under instrumental variable assumptions \citep{robins1994correcting}, and instrumental variable estimation of MTPs would also be of interest. Second, SNMM variants have been developed to estimate effects on survival \citep{picciotto2012structural} and binary \citep{wang2023coherent} outcomes. Effects of MTPs on survival and binary outcomes are of course of interest as well. Third, sensitivity analysis for SNMMs \citep{robins2004optimal,robins2000sensitivity} is well developed and would hopefully easily port over to the MTP setting, but that should be confirmed. Finally, marginal SNMMs that model heterogeneity as a function of a subset of covariates required for adjustment have been developed elsewhere, and it might be useful to transport them to the MTP setting, too.

\section{Acknowledgments}
I am extremely indebted to Jamie Robins for lots of things, but most relevant for this Acknowledgments section is that he suggested the general idea for this paper (`SNMMs for MTPs'). However, all errors are my own.
\bibliography{bibliography}
\section*{Appendix}
\subsection{Proof of Theorem 1}\label{appendix:proof1}
\begin{proof}
\textbf{Step 1: Identification.}
By the SNMM restriction,
\[
E\!\left[q\{\mu(A,L)-\mu(g(A,L),L)-\gamma_g(L,A;\psi^\ast)\}\right]=0.
\]
Moreover $E[Y-\mu(A,L)\mid A,L]=0$, hence the augmentation term in
\eqref{eq:point_score} has mean zero. Summing yields $E[\phi(O;\psi^\ast,\eta^\ast)]=0$.\\
\\
\textbf{Step 2: Orthogonality w.r.t.\ $\mu$.}
Consider $\mu_t=\mu^{*}+t\,\delta\mu+o(t)$ with $\pi$ fixed.
Only $Y-\mu_t$ and the identifying bracket depend on $t$.
Differentiate the expectation of \eqref{eq:point_score} at $t=0$:
\[
\left.\frac{d}{dt}\right|_{0}E\!\left[(q-\tilde q)(Y-\mu_t)\right]
= -\,E\!\left[(q-\tilde q)\delta\mu\right]
= -\,\langle q-\tilde q,\delta\mu\rangle_w.
\]
For the identifying term,
\[
\left.\frac{d}{dt}\right|_{0}
E\!\left[q\{\mu_t - T_g\mu_t - \gamma_g(\psi^{*})\}\right]
= E\!\left[q\,\delta\mu\right]-E\!\left[q\,T_g\delta\mu\right].
\]
Rewrite under $\langle\cdot,\cdot\rangle_w$ and apply the definition of $\tilde{q}$:
\[
E[q\,\delta\mu] - E[q\,T_g\delta\mu]
= \langle q,\delta\mu\rangle_w - \langle q, T_g\delta\mu\rangle_w
= \langle q,\delta\mu\rangle_w - \langle \tilde q,\delta\mu\rangle_w
= \langle q-\tilde q,\delta\mu\rangle_w.
\]
Summing the two derivatives: $-\langle q-\tilde q,\delta\mu\rangle_w+\langle q-\tilde q,\delta\mu\rangle_w=0$.\\
\\
\textbf{Step 3: Orthogonality w.r.t.\ $\pi$.}
Let $\pi_t=\pi^{*}+t\,\delta\pi+o(t)$ and $w_t=\pi_t$.
Then both $w_t$ and $\tilde q_t:=T_g^{\dagger}q$ vary with $t$.
Write
\[
F(t):=E\!\left[\phi(O;\psi^{*},\mu^{*},\pi_t)\right]
= \underbrace{E\!\left[(q-\tilde q_t)(Y-\mu^{*})\,\right]}_{=:A(t)}
+ \underbrace{E\!\left[q\{\mu^{*}-T_g\mu^{*}-\gamma_g(\psi^{*})\}\right]}_{=:B\ \text{(const.\ in $t$)}}.
\]
Hence $F'(0)=A'(0)$.
Differentiate $A(t)$ at $0$:
\[
A'(0)
=
-\,E\!\big[\dot{\tilde q}\,(Y-\mu^{*})\,w^{*}\big]
\;+\;
E\!\big[(q-\tilde q^{*})(Y-\mu^{*})\,\dot w\big],
\]
where dots denote $t$-derivatives at $0$, and $w^{*}=\pi^{*}$.
Now \emph{condition on $(A,L)$}. Since $E[Y-\mu^{*}(A,L)\mid A,L]=0$,
both terms vanish:
\[
E\!\big[\dot{\tilde q}\,(Y-\mu^{*})\,w^{*}\big]
=E\!\big[\,E[Y-\mu^{*}\mid A,L]\;\dot{\tilde q}\,w^{*}\,\big]=0,
\]
and similarly for the second term. Therefore $A'(0)=0$, i.e.\ $F'(0)=0$.

An equivalent route is to differentiate the identity
$\langle q,T_g h\rangle_{w_t}=\langle \tilde q_t,h\rangle_{w_t}$ at $t=0$
with $h(a,l)=Y-\mu^{*}(a,l)$. This yields
\[
E\!\left[(q-\tilde q^{*})\,T_g h\,\dot w\right]=E\!\left[\dot{\tilde q}\,h\,w^{*}\right],
\]
which is exactly the cancellation needed in $A'(0)$ when $h$ is replaced by
$Y-\mu^{*}$.
\end{proof}
\subsection{Proof of Theorem 2}\label{appendix:proof2}

\begin{proof}
\textbf{Step 1: Identification.}
By construction, for each $t$,
\(
E\!\big[V_{t+1}-\mu_t(H_t,A_t)\mid H_t,A_t\big]=0,
\)
hence the augmentation term of $\phi_t$ has mean zero.
The longitudinal SNMM restriction gives
\(
E\!\big[q_t\{\mu_t(H_t,A_t)-\mu_t(H_t,g_t(A_t,H_t))-\gamma_t(H_t,A_t;\psi^{*})\}\big]=0.
\)
Summing over $t$ yields $E[\Phi(O;\psi^{*},\eta^{*})]=0$.\\
\\
\textbf{Step 2: Orthogonality w.r.t.\ $\mu_t$.}
Fix $t$ and perturb $\mu_{t,\varepsilon}=\mu_t^{*}+\varepsilon\delta\mu_t+o(\varepsilon)$,
holding all $\{\mu_s,\pi_s\}_{s\neq t}$ and $\pi_t$ fixed. Note that
$V_{t+1}$ is treated as fixed when differentiating $\phi_t$ in its own nuisances.
Differentiate the expectation of $\phi_t$ at $\varepsilon=0$:
\[
\left.\frac{d}{d\varepsilon}\right|_{0}
E\!\left[(q_t-\tilde q_t)(V_{t+1}-\mu_{t,\varepsilon}(H_t,A_t))\right]
= -\,\langle q_t-\tilde q_t,\delta\mu_t\rangle_{w_t},
\]
and
\[
\left.\frac{d}{d\varepsilon}\right|_{0}
E\!\left[q_t\{\mu_{t,\varepsilon} - T_{g,t}\mu_{t,\varepsilon} - \gamma_t(\psi^{*})\}\right]
= \langle q_t,\delta\mu_t\rangle_{w_t} - \langle q_t, T_{g,t}\delta\mu_t\rangle_{w_t}
= \langle q_t-\tilde q_t,\delta\mu_t\rangle_{w_t},
\]
by the adjoint identity for time $t$. The two derivatives cancel exactly.
Because $\phi_t$ does not depend on $\mu_s$ for $s\neq t$ (given $V_{t+1}$ fixed),
we obtain $D_{\mu_t}E[\phi_t]=0$.

\textbf{Step 3: Orthogonality w.r.t. $\pi_t$.}
Perturb $\pi_{t,\varepsilon}$; then $w_{t,\varepsilon}=\pi_{t,\varepsilon}$ and
$\tilde q_{t,\varepsilon}=T_{g,t}^{\dagger,w_{t,\varepsilon}}q_t$ vary with $\varepsilon$.
Write
\[
F_t(\varepsilon):=E[\phi_t(O;\psi^{*},\mu_t^{*},\pi_{t,\varepsilon})]
= \underbrace{E\!\left[(q_t-\tilde q_{t,\varepsilon})\,(V_{t+1}-\mu_t^{*})\,\right]}_{=:A_t(\varepsilon)}
+ \underbrace{E\!\left[q_t\{\mu_t^{*}-T_{g,t}\mu_t^{*}-\gamma_t(\psi^{*})\}\right]}_{=:B_t\ \text{const.}}.
\]
Hence $F_t'(0)=A_t'(0)$.
Differentiate at $0$:
\[
A_t'(0)
=
-\,E\!\big[\dot{\tilde q}_t\,(V_{t+1}-\mu_t^{*})\,w_t^{*}\big]
\;+\;
E\!\big[(q_t-\tilde q_t^{*})\,(V_{t+1}-\mu_t^{*})\,\dot w_t\big].
\]
\emph{Conditioning on $(H_t,A_t)$} and using
$E[V_{t+1}-\mu_t^{*}(H_t,A_t)\mid H_t,A_t]=0$, both terms are zero,
hence $A_t'(0)=0$.
Equivalently, differentiate the time-$t$ adjoint identity
$\langle q_t, T_{g,t}h\rangle_{w_{t,\varepsilon}}=
\langle \tilde q_{t,\varepsilon}, h\rangle_{w_{t,\varepsilon}}$
at $\varepsilon=0$ with $h(H_t,A_t)=V_{t+1}-\mu_t^{*}(H_t,A_t)$ to see the same cancellation.

Summing the time-local derivatives over $t$ gives orthogonality for $\Phi$.
\end{proof}

\end{document}